\newtheorem{proposition}{Proposition}
\begin{document}

\title{RIF Regression via Sensitivity Curves
	}

\author{
		Javier Alejo\thanks{IECON-Universidad de la Rep\'ublica, Gonzalo Ram\'irez 1926, C.P. 11200, Montevideo, Uruguay. E-mail: \texttt{javier.alejo@fcea.edu.uy}}\\
		Gabriel Montes-Rojas\thanks{IIEP-BAIRES - Universidad de Buenos Aires and CONICET.  Email: \texttt{gabriel.montes@fce.uba.ar}}\\
		Walter Sosa-Escudero\thanks{Universidad de San Andr\'es and CONICET, Buenos Aires, Argentina. Email: \texttt{wsosa@udesa.edu.ar }}
	}

\maketitle

\begin{abstract}
This paper proposes an empirical method to implement the recentered influence function (RIF) regression of Firpo, Fortin and Lemieux (2009), a relevant method to study the effect of covariates on many statistics beyond the mean. In empirically relevant situations where the influence function is not available or difficult to compute, we suggest to use the \emph{sensitivity curve} (Tukey, 1977) as a feasible alternative. This may be computationally cumbersome when the sample size is large. The relevance of the proposed strategy derives from the fact that, under general conditions, the sensitivity curve converges in probability to the influence function. In order to save computational time we propose to use a cubic splines non-parametric method for a random subsample and then to interpolate to the rest of the cases where it was not computed. Monte Carlo simulations show good finite sample properties. We illustrate the proposed estimator with an application to the polarization index of Duclos, Esteban and Ray (2004).
\end{abstract}

\noindent JEL classification: J01, J31\\
\noindent Keywords: recentered influence function, sensitivity, inequality, polarization\\

\pagenumbering{arabic}

\baselineskip15pt

\newpage

\maketitle

\section{Introduction}

The \emph{recentered influence function} (RIF) regression, as proposed by Firpo, Fortin and Lemieux (2009), is a powerful tool to study the impact of changes in covariates on the unconditional distribution of a given outcome variable. Let $Y$ be a random variable with cumulative distribution function $F$, and $v(F)$ any `functional' of interest related to $F$.  For example, if $Y$ is income, $v(F)$ can be the mean, the Gini index, a quantile, or the poverty rate. The RIF is defined as $RIF(y,v,F)=v(F)+IF(y,v,F)$, where $IF(y,v, F)$ is the \emph{influence function} (IF) (Hampel, 1974) that measures the marginal impact of a particular data point in the support of $F$ in the value of $v(F)$. Influence functions play a key role in the robust statistics literature. 

Firpo et al. (2009, 2018) note that since $E[RIF(Y,v,F)]=v(F)$, by the law of iterated expectations $E_{X} \left[E_{Y|X} RIF(Y,v,F) \right]=v(F)$, and show that the effect on $v(F)$ that arises from shifting a scalar covariate from $X$ to $X+t$, where $t\downarrow 0$, is given by: 

\[\int \frac{d E[RIF(Y,v,F)|X=x]}{dx} dF(x).\] 
Hence, by properly modelling $E[RIF(Y,v,F)|X=x]$ in a regression fashion, the effect of $X$ on $v$ can be recovered as an `average derivative' of regressing $RIF(Y,v,F)$ on $X$. The implementation of the method requires to construct $RIF(Y,v,F)$ analytically for the functional of interest $v$ and then to regress it on $X$. In many relevant cases the IF required to obtain $RIF(Y,v,F)$ is immediately available; Fortin, Lemieux and Firpo (2011) present a useful `catalog' that includes the mean, the quantiles, the variance and the Gini index (see also Essama-Nssah and Lambert (2015) and Cowell and Flachaire (2015)).  However, there are many examples where this is not the case. Our paper proposes an alternative in these situations.

In this paper we propose a practical computation method based on the \emph{sensitivity curve} (SC) (Tukey, 1977). This procedure consists in comparing the full sample functional $v$ with that computed when the $j-$th observation is left out; this is the influence of this particular observation on the empirical version of $v$.
The relevance of the proposed strategy derives from the fact that, under general conditions, the SC converges in probability to the IF (see Nasser and Alam (2006) for a discussion). We provide an intuitive proof of this result.

The SC has some practical advantages over the IF. First, even when analytically available, in many cases the estimation of the IF involves dealing with the problem of selection of the meta-parameters, like bandwidths, which may add further complications. Second, in some relevant cases the IF may be difficult when not impossible to derive analytically. As an example of this case we study the Duclos, Esteban, and Ray (2004) polarization index, where for the general case there is no analytical functional form of the IF (see Appendix A2 for a summary of the construction and motivation of this index). Finally, many relevant examples where the IF can be easily derived involve additive or quasi-additive measures that do not apply to many important situations. 

This paper is organized as follows. Section \ref{IF} presents the main statistical derivations. Section \ref{splines} discusses the cubic spline method to interpolate the SC and considerably reduce computation time. Section \ref{MC} provides finite sample Monte Carlo simulations. Section \ref{Empirical} discusses an empirical exercise that shows that the performance of the SC is close to that of the analytical IF. 

\section{Influence via sensitivity curves} \label{IF}

Let $v(F)$ be a real-valued functional, where $v: \mathcal{F}_v \rightarrow \mathbb{R}$ and $\mathcal{F}_v$ is a class of distribution functions such that $F \in \mathcal{F}_v$ if $|v(F)|<\infty$. Consider two cumulative distribution functions (CDFs), $F$ and $G$, and let $H_{t, F, G} = t G+(1-t) F$, $t\in[0,1]$. Then, using the Von Mises (1947) expansion:

\begin{equation}\label{v}
v(H)=v(F)+t \partial v\left(H_{t, F, G}\right) / \partial t \left.\right|_{t=0}+r(t, F, G),
\end{equation}
with
\begin{equation}
\begin{aligned} \partial v\left(H_{t, F, G}\right) / \partial\left.t\right|_{t=0} &=\lim _{t \downarrow 0} \frac{v\left(H_{t, F, G}\right)-v(F)}{t} \\ &=\int \psi(y) d(G-F)(y). \end{aligned}
\end{equation}
When  $G = \Delta_y$ and $\Delta_y$ is the CDF of a random variable with probability mass of 1 at $y$, $\psi(y) = \partial v\left(H_{t, F, \Delta_{y}}\right) / \partial\left.t\right|_{t=0}$ is the \emph{influence function (IF)} of the functional $v$, labeled as $IF(y,v,F)$ (see Huber and Ronchetti (2009) for a general discussion; here we are following the derivation in Firpo et al. (2009, p.956)).

Consider now the last term in eq. \eqref{v}. Following Von Mises (1947):
\begin{equation}\label{r}
r(t, F, G)=\frac{\tilde{t}^{2}}{2} \partial^{2} v\left(H_{t, F, \Delta_{y}}\right) / \partial\left.t^{2}\right|_{t=0},
\end{equation}
for some $\tilde{t} \in[0, t]$, where
\begin{equation}
\partial^{2} v\left(H_{t, F, G}\right) / \partial t^{2}\left.\right|_{t=0}=\iint \phi(y, z) d(G-F)(y) d(G-F)(z),
\end{equation}
with $\phi(y,z)$ a symmetric function; again, see Von Mises (1947, p. 325) for details. Note that if $v(F)=v(cF)$ for all $c>0$ (scale invariance) then:

\begin{equation*}
\begin{array}{ll}{\text { (i) }} & {\int \psi(y) d F(y)=0} \\ {\text { (ii) }} & {\iint \phi(y, z) d F(y) d F(z)=0}\end{array}
\end{equation*}
The proof of (i) and (ii) follows from Jaeckel (1972).\footnote{Let $G=2F$, then $H_{(t,F,G)}=(1+t)F=cF$ and then by the invariance to scale 
	\begin{equation*}
\partial v\left(H_{t, F, G}\right) / \partial\left.t\right|_{t=0}=\lim _{t \downarrow 0} \frac{v(c F)-v(F)}{t}=\lim_{t \downarrow 0} \frac{0}{t}=0.
\end{equation*}
Moreover,	
\begin{equation*}
\partial^{2} v\left(H_{t, F, \Delta_{y}}\right) / \partial\left.t^{2}\right|_{t=0}=0.
\end{equation*}
}

The \emph{recentered influence function (RIF)}, is defined as $RIF(y,v,F)\equiv v(F)+IF(y,v,F)$, where, trivially, $E[IF(y,v,F)] = v(F)$, from property (i) above. Firpo et al. (2009) develop a RIF-regression framework that is similar to a standard regression except that the dependent variable, $Y$, is replaced by the IF of the statistic of interest, which allows to  estimate the effects of covariates $X$ on $v(F)$. 

Unfortunately, not all indicators have an IF with a specific analytical form and thus the RIF-regression may not be practically feasible. Our proposal consists of replacing the IF by the SC.

Let $\{y_i\}_{i=1}^n$ be an $iid$ sample and define $v_n=v(F_n)$ as the sample counterpart of $v(F)$, and let $v_n^{(j)}=v(F_n^{(j)})$ denote the case where $j-$th observation is left out, then:
\begin{eqnarray*}
F_{n}(y)&=&\frac{1}{n} \sum_{i=1}^{n} \; 1\left(y_{i} \leq y\right) \\
F_{n}^{(j)}(y)&=&\frac{1}{n-1} \sum_{i \neq j} 1\left(y_{i} \leq y\right).
\end{eqnarray*}
The \emph{sensitivity curve (SC)} is defined as
\begin{equation}
SC\left(y_{j}, v_{n}, F_{n}\right) \equiv n \cdot\left[v\left(F_{n}\right)-v\left(F_{n}^{(j)}\right)\right].
\end{equation}

The key property that links the IF to the SC is the following:

\begin{proposition}
Assume that $v(F)$ is twice continuously differentiable with respect to $F$ and $\psi(y)$ and $\phi(y,z)$ exist, and that $v$ is invariant to scale (i.e., $v(F)=v(cF)$ for $c>0$).
Then, $SC\left(y_{j}, v_{n}, F_{n}\right) \stackrel{p}{\rightarrow} I F\left(y_{j}, v, F\right)$ as $n\rightarrow\infty$.
\end{proposition}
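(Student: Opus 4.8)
The plan is to exploit the fact that deleting the $j$-th observation is precisely a contamination of the leave-one-out empirical distribution by an atom at $y_j$ of size $1/n$, and then to read off $SC(y_j,v_n,F_n)$ from the Von Mises expansion \eqref{v}. Splitting the sum defining $F_n$ into the term $i=j$ and the rest gives the identity $F_n = \tfrac{n-1}{n}F_n^{(j)} + \tfrac1n \Delta_{y_j}$, i.e. $F_n = H_{t,F_n^{(j)},\Delta_{y_j}}$ with $t=1/n$. Hence $SC(y_j,v_n,F_n) = n\bigl[v(H_{1/n,F_n^{(j)},\Delta_{y_j}}) - v(F_n^{(j)})\bigr]$ is exactly $n$ times the increment on the left-hand side of \eqref{v} with base distribution $F_n^{(j)}$ and perturbing distribution $\Delta_{y_j}$.

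Applying \eqref{v} with these choices and $t=1/n$, and writing $\psi^{(j)}(\cdot)=IF(\cdot,v,F_n^{(j)})$ for the influence function taken at $F_n^{(j)}$, the first-order term equals $\int \psi^{(j)}(y)\,d(\Delta_{y_j}-F_n^{(j)})(y) = \psi^{(j)}(y_j) - \int \psi^{(j)}\,dF_n^{(j)}$. Since $v$ is scale invariant, property (i) applied at $F_n^{(j)}$ kills the second piece, so the first-order term is just $IF(y_j,v,F_n^{(j)})$. Multiplying \eqref{v} by $n$ therefore yields
\begin{equation*}
SC(y_j,v_n,F_n) = IF\bigl(y_j,v,F_n^{(j)}\bigr) + n\, r\bigl(\tfrac1n, F_n^{(j)}, \Delta_{y_j}\bigr),
\end{equation*}
so the proposition reduces to two claims: the rescaled remainder is $o_p(1)$, and $IF(y_j,v,F_n^{(j)})\stackrel{p}{\rightarrow} IF(y_j,v,F)$.

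For the remainder I would use \eqref{r}: $r(\tfrac1n,F_n^{(j)},\Delta_{y_j}) = \tfrac{\tilde t^{\,2}}{2}$ times the second-order derivative of $v$ in the direction $\Delta_{y_j}-F_n^{(j)}$ at some intermediate $\tilde t\in[0,1/n]$, and by the formula for that derivative given just after \eqref{r} it equals $\phi(y_j,y_j) - 2\int\phi(y_j,z)\,dF_n^{(j)}(z) + \iint \phi\, dF_n^{(j)}dF_n^{(j)}$, whose last term vanishes by property (ii). Under the assumed twice continuous differentiability of $v$ this quantity is bounded in probability (it converges to $\phi(y_j,y_j)-2\int\phi(y_j,z)\,dF(z)$), and since $\tilde t^{\,2}\le n^{-2}$ we get $n\, r(\tfrac1n,F_n^{(j)},\Delta_{y_j}) = O_p(n^{-1}) = o_p(1)$. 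For the second claim, Glivenko--Cantelli gives $F_n^{(j)}\to F$ uniformly almost surely, and twice continuous differentiability of $v$ in $F$ makes $F\mapsto IF(y_j,v,F)$ continuous, so $IF(y_j,v,F_n^{(j)})\stackrel{p}{\rightarrow} IF(y_j,v,F)$; combining the two facts closes the argument.

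The hard part is making ``$v$ twice continuously differentiable with respect to $F$'' precise enough to carry both limiting steps: one needs a topology on $\mathcal F_v$ (for instance Hadamard differentiability along the relevant sequences of perturbations) under which the atoms $\Delta_{y_j}$ are admissible directions, the map $F\mapsto IF(y_j,v,F)$ is genuinely continuous, and the second-order remainder is bounded uniformly in $n$ with high probability rather than merely pointwise. A secondary wrinkle is that $y_j$ is itself one of the observations entering $F_n^{(j)}$; this is handled by conditioning on $y_j$ and using that $y_j$ and $F_n^{(j)}$ are asymptotically independent, so that convergence statements derived for a fixed evaluation point transfer to the random argument $y_j$.
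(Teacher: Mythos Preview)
Your argument is essentially the paper's own: both rest on the identity $F_n=(1-\tfrac1n)F_n^{(j)}+\tfrac1n\Delta_{y_j}$, then read off the first-order term as $IF(y_j,v,F_n^{(j)})$ via property~(i) and kill the remainder using the second-order Von~Mises term and property~(ii). The only cosmetic difference is the parametrisation---you take $t=1/n$ in the direction $\Delta_{y_j}$, while the paper takes $t=1$ in the direction $F_n-F_n^{(j)}$ and then computes $n(F_n-F_n^{(j)})=\Delta_{y_j}-F_n^{(j)}$; your choice makes the remainder bound $\tilde t^{\,2}\le n^{-2}$ slightly more transparent. One slip: $y_j$ does \emph{not} enter $F_n^{(j)}$ (that is the leave-one-out distribution), so under iid sampling $y_j$ and $F_n^{(j)}$ are exactly, not merely asymptotically, independent---the ``secondary wrinkle'' you raise is therefore not an issue at all.
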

\begin{proof}
	See the Appendix A1.
\end{proof}

Consequently, if the functional $v$ is smooth enough, the SC can be used instead of the analytical IF. Nasser and Alam (2006) show that Fr\'echet differentiability is sufficient for consistency. Of course smoothness may be considered a strong requirement. For example, for the case of quantiles $IF(y, Q_\tau,F) =(y-1[y\leq Q_\tau(F)])/f_y(Q_\tau(F))$, where $1[.]$ is an indicator function, $f_y(Q_\tau(F))$ is the density of the marginal distribution of $y$ evaluated at the $\tau$-quantile, and $Q_\tau(F)$ is the population $\tau$-quantile of the unconditional distribution of $y$. The indicator function  makes it non twice differentiable.

The \emph{recentered sensitivity curve (RSC)} is defined as:
\[RSC(y_j,v_n,F_n) \equiv v_n+SC(y_j,v_n,F_n)\]
Trivially $RSC\stackrel{p}{\rightarrow}RIF$. Hence, our proposal is to replace RIF with RSC. An then, to apply regression models to approximate distributional effects as in Firpo et al. (2009) RIF regression method.

\section{Computation of the SC via cubic splines}\label{splines}

The RSC method described above requires to compute the functional $v$ $n+1$ times, that is, the original using the entire sample plus all of the leave-one-out cases (i.e. $n$). This may be computationally cumbersome when the sample size is large.  In order to save computational time we propose to compute the $SC$ for a random sub-sample and then to interpolate to the rest of the domain of $y$. A widely used method to perform this type of adjustment is splines since they are implemented through a flexible functional form that is linear in parameters. In particular we use the restricted cubic splines method to interpolate for the value of the RSC for the cases where it was not computed. 

Cubic splines are piecewise-polynomial line segments whose function values and first and second derivatives agree at the boundaries where they join. The boundaries of these segments are called knots, and the fitted curve is continuous and smooth at the knot boundaries (see Smith, 1979;  Wegman and Wright, 1983; Harrell, 2001, ch. 2). Let $k_j$, $i = 1, . . . , K$, be the knot values defined in the support of $y$, then the equation of the cubic spline is

$$ J(y)= \beta_0 + \beta_1 y + \beta_2 y^2 + \beta_3 y^3 + \sum_{j=1}^{K}\gamma_j(y-k_j)_{+}^3,$$
where $u_{+}:=max(u,0)$. A common problem with cubic spline is that it fits poorly in the tails. One way to deal with this is by restricting $J(y)$ to be linear for  $y<k_1$ and $y>k_K$. This requirement is satisfied when $\beta_2=\beta_3=0$, $\sum_{j=1}^{K}\gamma_j=0$ and $\sum_{j=1}^{K}\gamma_j k_j=0$. Replacing this in the $J(y)$ equation, Durrleman and Simon (1989) show that the restricted cubic spline is then
$$ J_R(y)= \beta_0 + \beta_1 y + \sum_{j=1}^{K-2}\gamma_j h_j(y),$$
where 
$$ h_j(y) = (y-k_j)_{+}^3 + \frac{k_{K}-k_{j}}{k_{K}-k_{K-1}}(y-k_{K-1})_{+}^3 + \frac{k_{K-1}-k_{j}}{k_{K}-k_{K-1}}(y-k_{j})_{+}^3$$
for $j=1,..,K-2$. Note that $J_R(y)$ is linear in parameters and therefore can be estimated by ordinary least-squares (OLS) methods using $y$ and the $\{h_j\}_{j=1}^{K-2}$ auxiliary variables.

The interpolation of the RSC function proceeds in three steps.

In the first step, we select the knots on the full sample of $\{y_i\}_{i=1}^n$ and create the auxiliar variables $\{h_{1i},...,h_{(K-2)i}\}_{i=1}^n$ that corresponds to the restricted cubic spline method. 

In the second step, we consider a random sample without replacement of $\{y_i,h_{1i},...,h_{(K-2)i}\}_{i=1}^n$ denoted by $\{y_{i^*},h_{1i^*},...,h_{(K-2)i^*}\}_{i^*=1}^{n^*}$, where $n^*<n$. For this random sample we compute the RSC for each of the $n^*$ observations, say $\{\widetilde{SC}_{i^*}\}_{i^*=1}^{n^*}$. This is the step that significantly reduces the computation time (see the simulations in the Monte Carlo section). Moreover, we estimate the parameters $(\beta_0,\beta_1,\gamma_1,...,\gamma_{K-2})$ by fitting an OLS regression of $\widetilde{SC}_{i^*}$ as a function of $y_{i^*},h_{1i^*},...,h_{(K-2)i}^*$.

Finally, in the third step, we apply the estimated linear regression coefficients  to compute the cubic spline interpolation for the full sample,  
$$\widehat{SC}_{i}=\hat\beta_0 + \hat\beta_1 y_{i} +\hat\gamma_1 h_{1i}+...+\hat\gamma_{K-2} h_{(K-2)i}.$$
The $\widehat{SC}_{i}$ interpolated values are then used in the RSC method to compute the effect of covariates on the given functional $v(F)$ (see Orsini and Greenland, 2011, and Newson, 2012, for a discussion of how this interpolation works).

\section{Monte Carlo experiments}\label{MC}

In this section we run some numerical simulation exercises to evaluate the computational and statistical performance of the proposed method. Throughout this section we use the following baseline model,

\begin{equation*}
	Y = 20 + X + W,
\end{equation*}
where $X \sim Uniform(0,1)$ is the observable covariate and $W$ is the unobservable variable. Then we use the two alternative models:

\begin{enumerate}
	\item Location-scale model: $W = (1+X)U$, with $U \sim N(0,1)$. 
	\item Location-bimodal model: $W = (D(-4+U(2-X))+(1-D)(4+Z(2-X)))/5$, with $(U,Z,D)$ independent and with distributions $U \sim N(0,1)$, $Z \sim N(0,1)$ and $D$ Bernoulli with $Pr(D=1)=0.50$.
\end{enumerate}

In all the exercises in this section we use \texttt{STATA} version 14.1 MP (64-bit) installed on a computer with 16 GB of RAM, an Intel Core i7 processor and Windows 10 operating system.

\subsection{Finite sample performance}

We use 1000 Monte Carlo simulations to evaluate the estimators' performance and compute Bias, Variance and MSE (mean-squared error). We consider two sample sizes of $n=500$ and $n=5000$. To compute the population parameter we use the DGPs with 10 million observations and where we compute  the numerical derivative of a change $x' = x + \epsilon$, that is, $\frac{v(F)-v(F')}{\epsilon}$ where $F$ and $F'$ are the induced distribution functions of the corresponding DGP with $X$ and $X+\epsilon$, respectively, and with $\epsilon=0.0001$.

We evaluate 3 different functionals: variance (Table \ref{table:MCvar}), Gini coefficient (Table \ref{table:MCgini}) and DER polarization index with $\alpha=0.5$ (Tables \ref{table:MCder} and \ref{table:MCdersp}). For the variance and Gini we have an analytical formula of the IF. As such we compute the RIF effect together with the RSC proposed method. For the DER polarization we can only report the RSC effect (see the Appendix A2 for a description of this index). In all cases, for the RSC computation we report the full sample RSC method and the splines approximation (RSC(sp)). For the latter we use 100 points for $n=500$ and 1000 for $n=5000$.

Table \ref{table:MCvar} shows the performance of the proposed method for computing the marginal effect on the variance. The simulations show that the proposed RSC method has a similar performance to that of RIF, which is close to the population parameter in terms of Bias and MSE. The spline approximation has a weaker performance of $n=500$ but it is similar to the full sample RSC for $n=5000$. 

\begin{table}
	\centering
	\caption{Variance}\label{table:MCvar}
	\begin{tabular}{c ccc| ccc}
\hline													
&	\multicolumn{3}{c}{$n=500$}					&	\multicolumn{3}{c}{$n=5000$}					\\
\hline													
&	RIF	&	RSC	&	RSC(sp)	&	RIF	&	RSC	&	RSC(sp)	\\
\hline													
\multicolumn{7}{c}{(i): location-scale model}													\\
\hline													
Population	&	3.003	&	3.003	&	3.003	&	3.003	&	3.003	&	3.003	\\
Mean	&	2.999	&	3.017	&	3.189	&	2.995	&	2.997	&	2.981	\\
Bias	&	-0.003	&	0.015	&	0.187	&	-0.008	&	-0.006	&	-0.021	\\
Var	&	0.369	&	0.374	&	0.460	&	0.035	&	0.035	&	0.035	\\
MSE	&	0.369	&	0.374	&	0.494	&	0.035	&	0.035	&	0.036	\\
\hline													
\multicolumn{7}{c}{(ii): location-bimodal model}													\\
\hline													
Population	&	-0.120	&	-0.120	&	-0.120	&	-0.120	&	-0.120	&	-0.120	\\
Mean	&	-0.117	&	-0.118	&	-0.224	&	-0.119	&	-0.120	&	-0.131	\\
Bias	&	0.003	&	0.002	&	-0.104	&	0.000	&	0.000	&	-0.011	\\
Var	&	0.012	&	0.012	&	0.014	&	0.001	&	0.001	&	0.001	\\
MSE	&	0.012	&	0.012	&	0.025	&	0.001	&	0.001	&	0.001	\\
\hline													
	\end{tabular}

\footnotesize{
Note: own calculations using 1000  Monte  Carlo  simulations.
}
	
\end{table}

Table \ref{table:MCgini} shows the performance of the proposed method for computing the marginal effect on the Gini coefficient. The results are in line with those for the variance: the RSC has a good performance relative to the RIF. For this case, however, the RSP(sp) approximation is much closer to the full sample RSC, and as such there is a minimum loss in efficiency for using the Spline interpolation.

\begin{table}
	\centering
	\caption{Gini}\label{table:MCgini}
	\begin{tabular}{c ccc| ccc}
\hline													
&	\multicolumn{3}{c}{$n=500$}					&	\multicolumn{3}{c}{$n=5000$}					\\
\hline													
&	RIF	&	RSC	&	RSC(sp)	&	RIF	&	RSC	&	RSC(sp)	\\
\hline													
\multicolumn{7}{c}{(i): location-scale model}													\\
\hline													
Population	&	2.473	&	2.473	&	2.473	&	2.473	&	2.473	&	2.473	\\
Mean	&	2.517	&	2.521	&	2.430	&	2.512	&	2.512	&	2.550	\\
Bias	&	0.044	&	0.048	&	-0.043	&	0.039	&	0.039	&	0.077	\\
Var	&	0.252	&	0.254	&	0.277	&	0.023	&	0.023	&	0.024	\\
MSE	&	0.254	&	0.257	&	0.278	&	0.025	&	0.025	&	0.030	\\
\hline													
\multicolumn{7}{c}{(ii): location-bimodal model}													\\
\hline													
Population	&	-0.322	&	-0.322	&	-0.322	&	-0.322	&	-0.322	&	-0.322	\\
Mean	&	-0.355	&	-0.358	&	-0.479	&	-0.359	&	-0.360	&	-0.342	\\
Bias	&	-0.033	&	-0.035	&	-0.156	&	-0.037	&	-0.037	&	-0.020	\\
Var	&	0.031	&	0.031	&	0.036	&	0.003	&	0.003	&	0.003	\\
MSE	&	0.032	&	0.032	&	0.061	&	0.004	&	0.004	&	0.003	\\
\hline													
	\end{tabular}

\footnotesize{
Note: own calculations using 1000  Monte  Carlo  simulations.
}

\end{table}

Finally we consider the analysis of the DER polarization index with $\alpha=0.5$. As discussed above there is no analytical IF for this model, and therefore the use of the RSC is the only alternative to evaluate the effect of the covariates on the DER index.  
For this case, we also evaluate alternative models for the RSC regression models. Given that we will not be able to derive the functional form of the conditional model of the IF conditional on $X$, we compute three different alternatives: linear ($a+bX$), quadratic  ($a+bX+cX^2$) and cubic polynomials ($a+bX+cX^2+dX^3$). Then we compute the average partial effects, that is, $b$ for the linear case, $b+2c\bar X$ for quadratic and  $b+2c\bar X+3d \bar{X^2}$ for the cubic polynomial case. 

Table \ref{table:MCder} shows the simulation results for the full-sample RSC computation and Table \ref{table:MCdersp} for the Spline interpolation. The location-scale model works similarly across methods, with large reduction in Bias and MSE when the largest sample size is used. The location-bimodal model, however, shows considerable heterogeneity across models. In most cases the quadratic approximation seems to correctly capture the effect of a marginal effect of $X$ on the DER index. For the Splines, the sample size requirement seems to be more demanding than in previous models.

\begin{table}
	\centering
	\caption{DER($\alpha=0.5$) - RSC}\label{table:MCder}
	\begin{tabular}{c ccc| ccc}
		\hline													
		&	\multicolumn{3}{c}{$n=500$}					&	\multicolumn{3}{c}{$n=5000$}					\\
		\hline													
		&	Linear	&	Quadratic	&	Cubic	&	Linear	&	Quadratic	&	Cubic	\\
		\hline													
		\multicolumn{7}{c}{(i): location-scale model}													\\
		\hline													
		Population	&	2.153	&	2.153	&	2.153	&	2.153	&	2.153	&	2.153	\\
		Mean	&	2.249	&	2.249	&	2.193	&	2.226	&	2.226	&	2.146	\\
		Bias	&	0.096	&	0.096	&	0.040	&	0.073	&	0.073	&	-0.007	\\
		Var	&	0.217	&	0.219	&	0.687	&	0.020	&	0.019	&	0.066	\\
		MSE	&	0.226	&	0.228	&	0.689	&	0.025	&	0.025	&	0.066	\\
		\hline													
		\multicolumn{7}{c}{(ii): location-bimodal model}													\\
		\hline													
		Population	&	0.291	&	0.291	&	0.291	&	0.291	&	0.291	&	0.291	\\
		Mean	&	0.441	&	0.443	&	0.106	&	0.657	&	0.657	&	0.229	\\
		Bias	&	0.149	&	0.151	&	-0.186	&	0.366	&	0.366	&	-0.063	\\
		Var	&	0.190	&	0.186	&	0.605	&	0.024	&	0.023	&	0.082	\\
		MSE	&	0.212	&	0.209	&	0.640	&	0.158	&	0.157	&	0.086	\\
		\hline													
											
	\end{tabular}

\footnotesize{
Note: own calculations using 1000  Monte  Carlo  simulations.
}

\end{table}

\begin{table}
	\centering
	\caption{DER($\alpha=0.5$) - RSC (sp)}\label{table:MCdersp}
	\begin{tabular}{c ccc| ccc}
		\hline													
		&	\multicolumn{3}{c}{$n=500$}					&	\multicolumn{3}{c}{$n=5000$}					\\
		\hline													
		&	Linear	&	Quadratic	&	Cubic	&	Linear	&	Quadratic	&	Cubic	\\
		\hline													
		\multicolumn{7}{c}{(i): location-scale model}													\\
		\hline													
		Population	&	2.153	&	2.153	&	2.153	&	2.153	&	2.153	&	2.153	\\
		Mean	&	2.004	&	2.004	&	1.970	&	2.271	&	2.271	&	2.202	\\
		Bias	&	-0.149	&	-0.149	&	-0.183	&	0.118	&	0.119	&	0.049	\\
		Var	&	0.270	&	0.271	&	0.798	&	0.019	&	0.019	&	0.067	\\
		MSE	&	0.292	&	0.293	&	0.832	&	0.033	&	0.033	&	0.069	\\
		\hline													
		\multicolumn{7}{c}{(ii): location-bimodal model}													\\
		\hline													
		Population	&	0.291	&	0.291	&	0.291	&	0.291	&	0.291	&	0.291	\\
		Mean	&	-0.971	&	-0.970	&	-1.152	&	0.343	&	0.343	&	0.084	\\
		Bias	&	-1.263	&	-1.262	&	-1.443	&	0.052	&	0.052	&	-0.208	\\
		Var	&	0.162	&	0.161	&	0.484	&	0.027	&	0.026	&	0.076	\\
		MSE	&	1.756	&	1.753	&	2.568	&	0.029	&	0.029	&	0.119	\\
		\hline													
		
	\end{tabular}

\footnotesize{
Note: own calculations using 1000  Monte  Carlo  simulations.
}

\end{table}

\subsection{Computing time}

We analyze the goodness of fit of the spline interpolation by simulating a random realization of $n=1000$ using the location-scale model. Figure \ref{fig:figure-splines} shows the RSC computed with the complete sample together with the spline interpolation RSC(sp) using a random 10\% of the original sample. Although the RSC of the DER(0.5) seems to be quite complicated to approximate compared to those of Gini and variance, the adjustment of the spline seems to be reasonable for the three indicators analyzed.

\begin{figure}
\caption{Comparison of RSC and RSC(sp) fit}
\label{fig:figure-splines}
\centering
\includegraphics[scale=0.22]{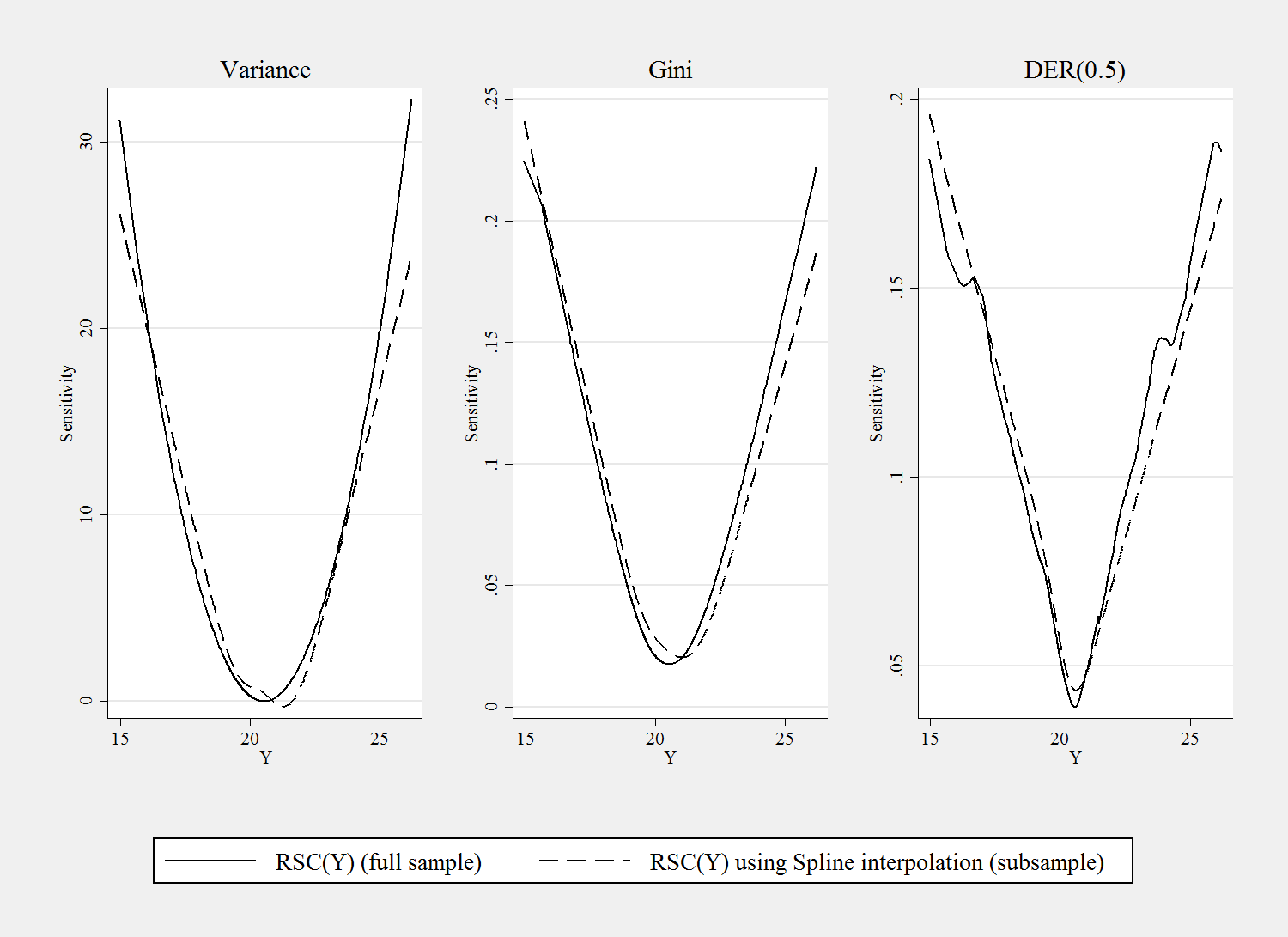}

\footnotesize{
Notes: own calculations using a random draw with sample size $n=1000$, $n^*=100$, \texttt{STATA} 14.1 MP (64-bit), 16 GB of RAM, an Intel Core i7 processor and Windows 10 operating system.
}

\end{figure}

Table \ref{table:PCtimes} shows the average computation time of the RSC and RSC(sp) with different sample sizes. For this exercise we use 50 random samples generated with the location-scale model.
In all cases, a subsample of 10\% of the original sample was considered for the RSC(sp) interpolation.

\begin{table}
	\caption{Average computing time (seconds)}\label{table:PCtimes}

\centering

\begin{tabular}{ccccc}
\hline
{\bf Index} & {\bf Sample} &  {\bf RSC} & {\bf RSC(sp)} & {\bf RSC(sp)/RSC} \\
\hline
           & $500\leq n \leq 1100$ &       0.17 &       0.03 &     19.7\% \\

  Variance & $1100< n \leq 1800$ &       0.43 &       0.04 &      9.4\% \\

           & $1800< n \leq 2500$ &       0.85 &       0.06 &      6.8\% \\
\hline
           & $500\leq n \leq 1100$ &       6.14 &       0.61 &      9.9\% \\

      Gini & $1100< n \leq 1800$ &      12.11 &       1.13 &      9.3\% \\

           & $1800< n \leq 2500$ &       18.4 &       1.69 &      9.2\% \\
\hline
           & $500\leq n \leq 1100$ &       20.2 &       0.66 &      3.3\% \\

  DER(0.5) & $1100< n \leq 1800$ &         63 &       1.42 &      2.3\% \\

           & $1800< n \leq 2500$ &        128 &       2.49 &      1.9\% \\
\hline
\end{tabular}  

\footnotesize{
Note: own calculations using 50 random draws with \texttt{STATA} 14.1 MP (64-bit), 16 GB of RAM, an Intel Core i7 processor and Windows 10 operating system.
}

\end{table}

As expected, for all sample sizes, the fastest RSC to compute is for the variance, while the slowest is the DER(0.5) index, since it involves a non-parametric estimate of a density. The time required to compute the complete RSC increases markedly with the sample size; however, the estimate based on the spline RSC(sp) increases only slightly. For example, for samples between 500 and 1100 observations, computing the RSC of the variance using splines represents 19.7\% of the time it takes with the complete sample, while with larger sample sizes this percentage represents just under 10\%. This saving in computational time is similar for the Gini index (9.5\% average) and definitely more noticeable for the DER (2.5\% average). Figure \ref{fig:figure-times} clearly shows the relative computational advantage of using spline interpolation as larger samples are used.

\begin{figure}
\caption{Comparison of computing times}
\label{fig:figure-times}
\centering
\includegraphics[scale=0.22]{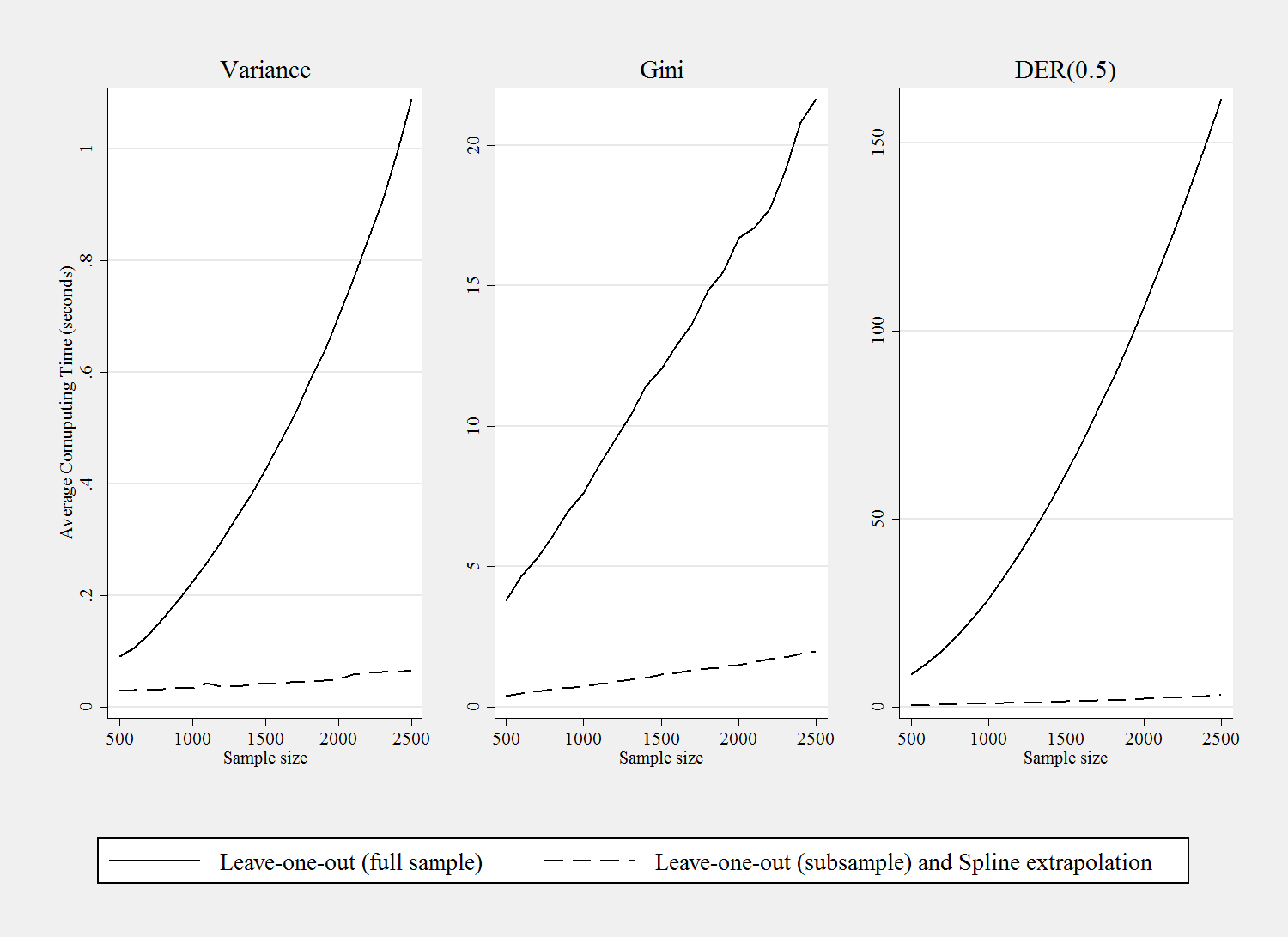}

\footnotesize{
Note: own calculations using 50 random draws with \texttt{STATA} 14.1 MP (64-bit), 16 GB of RAM, an Intel Core i7 processor and Windows 10 operating system.
}

\end{figure}

\section{Empirical illustration} \label{Empirical}

This section presents empirical applications. We first compare the empirical performance of RIF and RSC for the variance and the Gini index, for which the IF can be obtained analytically. Then we add the DER polarization index (Duclos, Esteban, and Ray, 2004) where an explicit analytical closed-form solution for IF is not available (see the Appendix A2 for a description of this index).

We use an extract from the Merged Outgoing Rotation Group of the Current Population Survey of 1983, 1984 and 1985 for males only. More details about the data can be found in Lemieux (2006). The variable of interest is $Y$, the hourly wage, and the covariates $X$ are an indicator of whether the individual is unionized, years of education, whether he is married, non-white, his experience. We use a linear specification in all regressions but, given the results of the previous section, in the case of the polarization index we add a more flexible specification that incorporates the squares and non-trivial interactions of all the covariates.

Obtaining the RSC for each observation using the leave-one-out method can be computationally intensive if $n$ is too large since it requires a separate calculation for each observation. Therefore, we also consider computing the RSC by interpolating an estimated spline using 1000 random points in the distribution of $Y$ (this is denoted as RSC(sp)).

\begin{table}
	\caption{Wage Inequality}
		\label{table1}
	\centering
	\scriptsize
\begin{tabular}{rccc|ccc}
\hline
    {\bf } &  \multicolumn{ 3}{c}{{\bf Variance}} &     \multicolumn{ 3}{|c}{{\bf Gini}} \\

    {\bf } &  {\bf RIF} &  {\bf RSC} & {\bf RSC(sp)} &  {\bf RIF} &  {\bf RSC} & {\bf RSC(sp)} \\
\hline
           &            &            &            &            &            &            \\

     Union &  -15.51*** &  -15.51*** &  -16.42*** &   -6.62*** &   -6.62*** &   -6.89*** \\

           &    (0.189) &    (0.162) &    (0.157) &    (0.055) &    (0.046) &    (0.046) \\

 Education &    1.48*** &    1.48*** &    1.22*** &   -0.49*** &   -0.49*** &   -0.56*** \\

           &    (0.030) &    (0.037) &    (0.035) &    (0.009) &    (0.010) &    (0.010) \\

Experience &    0.15*** &    0.15*** &    0.10*** &   -0.10*** &   -0.10*** &   -0.11*** \\

           &    (0.007) &    (0.008) &    (0.008) &    (0.002) &    (0.002) &    (0.002) \\

   Married &  -10.13*** &  -10.13*** &  -10.98*** &   -5.24*** &   -5.24*** &   -5.44*** \\

           &    (0.190) &    (0.185) &    (0.181) &    (0.056) &    (0.057) &    (0.057) \\

 Non-white &    0.91*** &    0.91*** &    1.33*** &    1.53*** &    1.53*** &    1.65*** \\

           &    (0.262) &    (0.248) &    (0.246) &    (0.077) &    (0.079) &    (0.080) \\

           &            &            &            &            &            &            \\
\hline
Observations &    266,956 &    266,956 &    266,956 &    266,956 &    266,953 &    266,956 \\
\hline
\end{tabular}

\emph{Source: }Extract from the Merged Outgoing Rotation Group of the Current Population Survey of 1983, 1984 and 1985. Notes: Standard errors in parentheses; *** $p<0.01$, ** $p<0.05$, * $p<0.1$; (sp) indicates that the RSC was estimated using a cubic spline with a random subsample of 1000 points; all estimates are multiplied by 100.
	
\end{table}

\begin{table}
	\caption{Wage Polarization: DER($\alpha=0.50$)}
		\label{table2}
	\centering
	\scriptsize
\begin{tabular}{rcc|cc}
\hline
    {\bf } & \multicolumn{ 2}{c}{{\bf RSC}} & \multicolumn{ 2}{|c}{{\bf RSC(sp)}} \\

    {\bf } & {\bf Linear} & {\bf Quad. Form.} & {\bf Linear} & {\bf Quad. Form.} \\
\hline
           &            &            &            &            \\

     Union &   -3.38*** &   -3.43*** &   -3.43*** &   -3.45*** \\

           &    (0.027) &    (0.030) &    (0.025) &    (0.027) \\

 Education &   -0.35*** &   -0.35*** &   -0.40*** &   -0.39*** \\

           &    (0.006) &    (0.006) &    (0.005) &    (0.005) \\

Experience &   -0.07*** &   -0.06*** &   -0.08*** &   -0.06*** \\

           &    (0.001) &    (0.001) &    (0.001) &    (0.001) \\

   Married &   -3.28*** &   -2.42*** &   -3.10*** &   -2.36*** \\

           &    (0.034) &    (0.035) &    (0.031) &    (0.033) \\

 Non-white &    0.99*** &    1.04*** &    0.96*** &    1.01*** \\

           &    (0.050) &    (0.049) &    (0.044) &    (0.043) \\

           &            &            &            &            \\
\hline
Observations &    266,956 &    266,956 &    266,956 &    266,956 \\
\hline
\end{tabular}

\emph{Source: }Extract from the Merged Outgoing Rotation Group of the Current Population Survey of 1983, 1984 and 1985. Notes: Standard errors in parentheses; *** $p<0.01$, ** $p<0.05$, * $p<0.1$; (sp) indicates that the RSC was estimated using a cubic spline with a random subsample of 1000 points; all estimates are multiplied by 100.
	
\end{table}

Table \ref{table1} shows results for the variance and the Gini index. Remarkably, the differences between the RIF and RSC regressions are negligible. Interestingly, the approximation obtained through the spline intrapolation seems to be accurate, suggesting that it is a convenient computational strategy relative to the leave-one-out method.

Table \ref{table2} also shows results for the DER polarization indexes, for which the Gini columns correspond to a particular case ($\alpha=0$), and for proper polarization we set $\alpha=0.5$ following Duclos et al. (2004). We stress the fact that the IF function is not available for this case, hence we obtain results based on the RSC solely.  Note that in this case the coefficients of the linear model give similar results to the average partial effects of the more flexible model. Considering the results of the simulations in the previous section, this is probably due to the large sample size since the RSC approximation to the RIF is more precise. Again, the computationally convenient spline approximation produces similar results than when RSC is computed directly. Even though a detailed study of the effects on inequality and polarization exceeds the scope of this note, we remark that all factors reduce both measures (i.e., higher levels education predict less \emph{unconditionally} inequality and polarization), and that effects are stronger for inequality.

\section*{References}

\noindent Cowell, F.A., Flachaire, E.  2015. Statistical Methods for Distributional Analysis. In Anthony B. Atkinson and
Francois Bourguignon (eds.), Handbook of Income Distribution. Amsterdam: Elsevier.

\vspace{0.4cm}

\noindent Davies, J.B., Fortin, N.M., Lemieux, T. 2017. Wealth inequality: Theory, Measurement and Decomposition. \textit{Canadian Journal of Economics/Revue Canadienne d'\'Economique} 50(5): 1224-1261.

\vspace{0.4cm}

\noindent DiNardo, J., Fortin, N.M., Lemieux, T. 2017. Labor Market Institutions and the Distribution of Wages, 1973-1992: A Semiparametric Approach.
\textit{Econometrica} 64(5): 1001-1044.

\vspace{0.4cm}

\noindent Duclos, J.-Y., Esteban, J., Ray, D. 2004. Polarization: Concepts, Measurement, Estimation. \textit{Econometrica} 72(6): 1737-1772.

\vspace{0.4cm}

\noindent Durrleman, S., Simon, R. 1989. Flexible Regression Models with Cubic Splines. \textit{Statistics in Medicine} 8(5): 551-561.

\vspace{0.4cm}

\noindent Essama-Nssah, B., Lambert, P.J. 2015. Chapter 6: Influence Functions for Policy Impact Analysis. In John A. Bishop and Rafael Salas (eds.), Inequality, Mobility and Segregation: Essays in Honor of Jacques Silber, pp.135-159. Bigley, UK: Emerald Group Publishing Limited.

\vspace{0.4cm}

\noindent Firpo, S.P., Fortin, N.M., Lemieux, T. 2009. Unconditional Quantile
Regressions. \textit{Econometrica} 77(3): 953-973.

\vspace{0.4cm}

\noindent Firpo, S.P., Fortin, N.M., Lemieux, T. 2018. Decomposing Wage Distributions Using Recentered Influence Function
Regressions. \textit{Econometrics} 6(3): 41.

\vspace{0.4cm}

\noindent Fortin, N.M., Lemieux, T., Firpo, S.P. 2011. Decomposition Methods in Economics. In Orley Ashenfelter and David Card (eds.), Handbook of Labor Economics. Amsterdam: Elsevier.

\vspace{0.4cm} 

\noindent Gasparini, L., Horenstein, M., Molina, E., Olivieri, S. 2008. Income Polarization in Latin America: Patterns and Links with Institutions and Conflict, \textit{Oxford Development Studies}, 36: 461-484.

\vspace{0.4cm}

\noindent Hampel, F. 1974. The Influence Curve and its Role in Robust Estimation. \textit{Journal of the American Statistical Association} 69(346): 383-393.

\vspace{0.4cm}

\noindent  Harrell, F. E., Jr. 2001. Regression Modeling Strategies: With Applications to Linear Models, Logistic Regression, and Survival Analysis. New York: Springer.

\vspace{0.4cm}

\noindent Huber, P., Ronchetti, E.M. 2009. \textit{Robust Statistics} (2nd edition). Wiley.

\vspace{0.4cm}
\noindent Jaeckel, L.A. 1972. Estimating regression coefficients by minimizing the dispersion of the residuals. \textit{Annals of Mathematical Statistics} 43, 1449-1458.

\vspace{0.4cm}

\noindent Lemieux, T. 2006. Increasing Residual Wage Inequality: Composition Effects, Noisy Data, or Rising Demand for Skill? \textit{American Economic Review} 96(3): 461-498.

\vspace{0.4cm}

\noindent Nasser, M., Alam, M. 2006. Estimators of Influence Function. \textit{Communications in Statistics - Theory and Methods}, 35(1), 21-32.

\vspace{0.4cm}

\noindent Newson, R. B. 2012. Sensible parameters for univariate and multivariate splines. \textit{Stata Journal}, 12: 479–504

\vspace{0.4cm}

\noindent Orsini, N., and S. Greenland. 2011. A procedure to tabulate and plot results after flexible modeling of a quantitative covariate. \textit{Stata Journal}, 11, 1–29.

\vspace{0.4cm}

\noindent Smith, P. L. 1979. Splines as a useful and convenient statistical tool. \textit{American Statistician}, 33, 57–62.

\vspace{0.4cm}
\noindent Tukey, J.W. 1977. Exploratory Data Analysis, Addison-Wesley, Reading, MA.

\vspace{0.4cm}

\noindent von Mises, R. 1947. On the Asymptotic Distribution of Differentiable Statistical Functions. \textit{Annals of Mathematical Statistics} 18(3): 309-348.

\vspace{0.4cm}

\noindent Wegman, E. J., and I. W. Wright. 1983. Splines in statistics. \textit{Journal of the American Statistical Association}, 78, 351–365.

\newpage

\section*{Appendix A1}
\textbf{Proof of Proposition 1.}

Using eq. \eqref{v} with $F_n$ and $F_n^{(j)}$ for the case of $t=1$: 
\begin{equation}
 v\left(F_{n}\right)=v\left(F_{n}^{(j)}\right)+\int \psi_{n}(y) d\left(F_{n}-F_{n}^{(j)}\right)(y)+r\left(\tilde{t}, F_{n}, F_{n}^{(j)}\right),
\end{equation}
for some $\tilde{t}\in[0,1]$. Note that $\psi_{n}(y)=I F\left(y, v, F_{n}\right) \stackrel{p}{\rightarrow} \psi(y)$ by continuity of the probability limit.

Now note that $n\left[F_{n}-F_{n}^{(j)}\right]=1\left(y_{j}<y\right)+O_{p}(1)$ because

\begin{equation*}
\begin{array}{c}{F_{n}(y)=\frac{1}{n} 1\left(y_{j} \leq y\right)+\frac{n-1}{n} F_{n}^{(j)}(y)}, \\ {F_{n}(y)-F_{n}^{(j)}(y)=\frac{1}{n} 1\left(y_{j} \leq y\right)+\frac{n-1}{n} F_{n}^{(j)}(y)-F_{n}^{(j)}(y)}, \\ {F_{n}(y)-F_{n}^{(j)}(y)=\frac{1}{n} 1\left(y_{j} \leq y\right)-\frac{1}{n} F_{n}^{(j)}(y)}.\end{array}
\end{equation*}
That is,

\begin{equation}\label{nF}
n\left[F_{n}(y)-F_{n}^{(j)}\right]=1\left(y_{j} \leq y\right)-a_{n},
\end{equation}
with $a_{n}=F_{n}^{(j)}(y) \stackrel{p}{\rightarrow} F(y)$ by the Law of Large Numbers.

Then,

\begin{equation}\label{nv}
\begin{array}{c}{n \cdot\left[v\left(F_{n}\right)-v\left(F_{n}^{(j)}\right)\right]=\int \psi_{n}(y) d\left(1\left(y_{j} \leq y\right)-a_{n}\right)(y)+n \cdot r\left(\tilde{t}, F_{n}, F_{n}^{(j)}\right)} \\ {n \cdot\left[v\left(F_{n}\right)-v\left(F_{n}^{(j)}\right)\right]=\int \psi_{n}(y) d\left(1\left(y_{j} \leq y\right)\right)(y)-\int \psi_{n}(y) d\left(a_{n}\right)(y)+n \cdot r\left(\tilde{t}, F_{n}, F_{n}^{(j)}\right)}\end{array}
\end{equation}

Using the fact that $1\left(y_{j} \leq y\right)$ is the Dirac function, the first term of eq. \eqref{nv} is

\begin{equation*}
\int \psi_{n}(y) d\left(1\left(y_{j} \leq y\right)\right)(y)=\psi_{n}\left(y_{j}\right) \stackrel{p}{\rightarrow} \psi\left(y_{j}\right).
\end{equation*}

Noting that $a_{n} \stackrel{p}{\rightarrow} F(y)$ and $\psi_{n}(y) \stackrel{p}{\rightarrow} \psi(y)$, by continuity of the probability limit, the second term of \eqref{nv} becomes 
\begin{equation*}
\operatorname{plim} \int \psi_{n}(y) d\left(a_{n}\right)(y)=\int \psi(y) d F(y)=0,
\end{equation*}
because of property (i). Then,
\begin{equation*}
\operatorname{plim} \int \psi(y) d\left(1\left(y_{j} \leq y\right)+a_{n}\right)(y)=\int \psi(y) d\left(1\left(y_{j} \leq y\right)\right)(y)=\psi(y).
\end{equation*}

It remains to study the third term in \eqref{nv}. From \eqref{r},

\begin{equation*}
\begin{array}{c}{n \cdot r\left(\tilde{t}, F_{n}, F_{n}^{(j)}\right)=n \cdot \frac{\tilde{t}^{2}}{2} \iint \psi(y, z) d\left(F_{n}-F_{n}^{(j)}\right)(y) d\left(F_{n}-F_{n}^{(j)}\right)(z)} \\ {n \cdot r\left(\tilde{t}, F_{n}, F_{n}^{(j)}\right)=\frac{1}{n} \cdot \frac{\tilde{t}^{2}}{2} \iint \psi(y, z) d\left[n\left(F_{n}-F_{n}^{(j)}\right)\right](y) d\left[n\left(F_{n}-F_{n}^{(j)}\right)\right](z)}\end{array}
\end{equation*}
for some $\tilde{t}\in[0,1]$. Then using \eqref{nF} and property (ii),

\begin{equation*}
\operatorname{plim} \iint \phi(y, z) d\left[n\left(F_{n}-F_{n}^{(j)}\right)\right](y) d\left[n\left(F_{n}-F_{n}^{(j)}\right)\right](z)=\phi\left(y_{j}, y_{j}\right).
\end{equation*}
Then it follows that
\begin{equation*}
\operatorname{plim}\left\{n \cdot r\left(\tilde{t}, F_{n}, F_{n}^{(j)}\right)\right\}=\left(\operatorname{plim} \frac{1}{n}\right) \cdot \frac{\tilde{t}^{2}}{2} \phi\left(y_{j}, y_{j}\right)=0.
\end{equation*}
Then, the result follows,
\begin{equation*}
\operatorname{plim} \frac{v\left(F_{n}\right)-v\left(F_{n}^{(j)}\right)}{1 / n}=\psi\left(y_{j}\right)=I F\left(y_{j}, v, F\right). \ QED
\end{equation*}

\section*{Appendix A2: Polarization index}

We motivate the case of a model where the IF is not available: the DER polarization index (Duclos, Esteban, and Ray, 2004). 

Polarization is an important welfare concept in economics and political science. Intuitively, it measures the tension between individuals in a society, that depends positively on how distant individuals are between groups (alienation) and how close they are within a group (identification). From this perspective, a standard measure of inequality like the Gini index focuses on just the first component. Duclos et al. (2004) provide a full axiomatic framework that leads to a logically coherent measure of polarization. For a detailed empirical study on polarization for the case of Latin America and the Caribbean, see Gasparini et al. (2008).

Let $y_1, y_2, \ldots, y_n$ be and iid sample of incomes, ordered from lowest to highest. Duclos et al. (2004) propose the following empirical measure of polarization:

\[ P_\alpha = \frac 1 n \sum_{i=1}^n \hat f(y_i)^{\alpha} \hat a (y_i)\]
where $\hat a(y_i)  = \hat \mu + y_i\left(n^{-1} (2i-1) - 1\right) -  n^{-1} \left(2 \sum_{j=1}^{i-1} y_j + y_i \right)$, $\hat \mu$ is the sample mean and $\hat f (y_i)$ is an estimate of the  density of incomes. The parameter $\alpha$ is set exogenously and plays a key role in characterizing polarization. As a matter of fact, when $\alpha=0$ polarization reduces to the Gini index (note that for this particular case the IF is available). Larger values of $\alpha$ result in the index giving relatively more importance to identification, that is, to how close individuals are `surrounded' by others of similar income. The axiomatic approach of Duclos et al. (2004) imposes lower and upper bounds to the values $\alpha$ may take in practice.

\pagebreak
\newpage

\end{document}